\newtheorem{theorem}{Theorem}
\theoremstyle{definition}
\newcommand{\cmark}{\ding{51}}%
\newcommand{\xmark}{\ding{55}}%
  \providecommand\BibTeX{{%
    \normalfont B\kern-0.5em{\scshape i\kern-0.25em b}\kern-0.8em\TeX}}}
\begin{document}

\title{Overhead-free User-side Recommender Systems}

\author{Ryoma Sato}
\email{rsato@nii.ac.jp}
\affiliation{%
  \institution{National Institute of Informatics}
  \country{Japan}
}


\renewcommand{\shortauthors}{Sato et al.}


\begin{abstract}
    Traditionally, recommendation algorithms have been designed for service developers. But recently, a new paradigm called user-side recommender systems has been proposed. User-side recommender systems are built and used by end users, in sharp contrast to traditional provider-side recommender systems. Even if the official recommender system offered by the provider is not fair, end users can create and enjoy their own user-side recommender systems by themselves. Although the concept of user-side recommender systems is attractive, the problem is they require tremendous communication costs between the user and the official system. Even the most efficient user-side recommender systems require about $5 \times$ more costs than provider-side recommender systems. Such high costs hinder the adoption of user-side recommender systems. In this paper, we propose overhead-free user-side recommender systems, \textsc{RecCycle}, which realizes user-side recommender systems without any communication overhead. The main idea of \textsc{RecCycle} is to recycle past recommendation results offered by the provider's recommender systems. The ingredients of \textsc{RecCycle} can be retrieved ``for free,'' and it greatly reduces the cost of user-side recommendations. In the experiments, we confirm that \textsc{RecCycle} performs as well as state-of-the-art user-side recommendation algorithms while \textsc{RecCycle} reduces costs significantly.
\end{abstract}


\begin{CCSXML}
    <ccs2012>
    <concept>
    <concept_id>10002951.10003317.10003347.10003350</concept_id>
    <concept_desc>Information systems~Recommender systems</concept_desc>
    <concept_significance>500</concept_significance>
    </concept>
    <concept>
    <concept_id>10002951.10003260</concept_id>
    <concept_desc>Information systems~World Wide Web</concept_desc>
    <concept_significance>500</concept_significance>
    </concept>
    </ccs2012>
\end{CCSXML}

\ccsdesc[500]{Information systems~Recommender systems}
\ccsdesc[500]{Information systems~World Wide Web}

\keywords{recommender systems; user-side realization}

\maketitle

\section{Introduction}

Recommender systems have been used in many web services \cite{linden2003amazon, geyik2019fairness}. It was estimated that $35$ \% of purchases on Amazon and $75$ \% of watches on Netflix came from recommender systems \cite{mackenzie2013how}. Recommender systems are indispensable both for businesses and users.

Although traditional recommender systems aim only at conversion, many fine-grained demands for recommender systems have emerged. Users may want to receive fair recommendations \cite{kamishima2012enhancement, biega2018equity, milano2020recommender} or serendipitous recommendations \cite{chen2021values, anderson2020algorithmic, steck2018calibrated, mladenov2020optimizing, zheng2018fairness}, or users may want recommender systems to be transparent \cite{sinha2002role, balog2019transparent} and steerable \cite{green2009generating, balog2019transparent}. For example, on LinkedIn, recruiters may want to receive account recommendations that are fair in terms of gender and race to avoid (implicit) discrimination. A citizen who gathers information for election may want to receive both Republican and Democrat news equitably to avoid filter bubbles \cite{pariser2011filter}. Cinema enthusiasts may want to receive recommendations that involve minor movies instead of popular movies that enthusiasts already know.

However, there are too many kinds of demands, and the service provider cannot cope with all of them. Besides, service provider may not implement such functionalities on purpuse. For example, some service providers may intentionally choose to increase short-term conversions instead of caring the fairness of the platform.

If the service provider does not implement fair recommender systems, users are forced to use unfair ones or quit the service. It has been considered that users have little ability to change the recommendations. In most cases, the only option available to the user is to wait until the service implements the functionality. \citet{green2009generating} also pointed out that ``If users are unsatisfied with the recommendations generated by a particular system, often their only way to change how recommendations are generated in the future is to provide thumbs-up or thumbs-down ratings to the system.''

User-side recommender systems \cite{sato2022private} offer a proactive solution to this problem. Users can build their own (i.e., private, personal, or user-side) recommender systems to ensure recommendations are made fairly and transparently. Since the system is built by the user, it can be customized to meet specific criteria they want and add the functionalities they want. User-side recommender systems realize ultimate personalization.

\begin{table}[tb]
    \small
    \caption{Properties of user-side recommender systems. The definitions of these properties are shown in Section \ref{sec: properties}. Postprocessing (PP) applies postprocessing directly to the official recommender system, which is not sound when the list does not contain some sensitive groups (See also Section \ref{sec: challenge}).}
    \vspace{-0.1in}
    \centering
    \begin{tabular}{lcccccc} \toprule
                      & PP     & \textsc{PrivateRank} \cite{sato2022private} & \textsc{PrivateWalk} \cite{sato2022private} & ETP \cite{sato2022towards} & \textsc{Consul} \cite{sato2022towards} & \textsc{RecCycle} (ours) \\ \midrule
        Consistent    & {\color[HTML]{03af7a} \cmark} & {\color[HTML]{03af7a} \cmark}                                     & {\color[HTML]{ff4b00} \xmark}                                     & {\color[HTML]{03af7a} \cmark}                    & {\color[HTML]{03af7a} \cmark}                                & {\color[HTML]{03af7a} \cmark}                     \\
        Sound         & {\color[HTML]{ff4b00} \xmark} & {\color[HTML]{03af7a} \cmark}                                     & {\color[HTML]{03af7a} \cmark}                                     & {\color[HTML]{03af7a} \cmark}                    & {\color[HTML]{03af7a} \cmark}                                & {\color[HTML]{03af7a} \cmark}                     \\
        Local         & {\color[HTML]{03af7a} \cmark} & {\color[HTML]{ff4b00} \xmark}                                     & {\color[HTML]{03af7a} \cmark}                                     & {\color[HTML]{ff4b00} \xmark}                    & {\color[HTML]{03af7a} \cmark}                                & {\color[HTML]{03af7a} \cmark}                     \\
        Overhead-free & {\color[HTML]{03af7a} \cmark} & {\color[HTML]{ff4b00} \xmark}                                     & {\color[HTML]{ff4b00} \xmark}                                     & {\color[HTML]{ff4b00} \xmark}                    & {\color[HTML]{ff4b00} \xmark}                                & {\color[HTML]{03af7a} \cmark}                     \\ \bottomrule
    \end{tabular}
    \label{tab: prop}
\end{table}

The concept of user-side recommender systems looks similar to steerable \cite{green2009generating} (or scrutable \cite{balog2019transparent}) recommender systems at first glance. Steerable recommender systems also allow users to control the recommendation results. However, the key difference is that steerable systems are implemented by the service provider, while user-side recommender systems are built by the users themselves. What to steel is chosen by the service provider in traditional steerable recommender systems. If the recommender system in use is not steerable in the way they want, users cannot enjoy steerability and must wait for the service provider to implement it. By contrast, user-side recommender systems allow users to make the system steerable, even if the service provider implemented only a standard non-steerable system.

Although user-side recommender systems are attractive, building them is challenging. End users do not have access to the data stored in the service's database, unlike the developers employed by the service provider. Most modern recommender systems rely on user log data and/or item features to make recommendations. At first glance, it seems impossible to build an effective recommender system without such data. \citet{sato2022private} addressed this problem by using the official recommender systems provided by the target web service. Although the official recommender systems are black-box and possibly unfair, Sato's methods turn them into fair and transparent ones on the user's side by combining multiple outputs. However, existing user-side recommender systems issue multiple queries to the official (possibly unfair) recommender system to build a single (fair) recommendation list. In other words, these methods trade communication costs with fairness. The drawback of this approach is the communication cost. Even the most efficient user-side recommender systems, \textsc{Consul}, require $5$ queries to build a recommendation list \cite{sato2022towards}. This means that \textsc{Consul} loads the service $5$ times more. Such a high communication cost causes problems. First, the service provider may disfavor and prohibit such users' activities to mitigate the load on the service. Second, end users cannot afford to pay the high API cost. Third, such systems are not suitable for real-time applications due to the response time of the multiple queries.

We advocate that the communication cost between the end user and the service is crucial for effective user-side recommender systems. An ideal user-side system works as if it were an official system. The recommendation list should be shown to the user at the same time as the official system. However, existing user-side recommender systems require additional queries and thus require more loading time than the official system, which leads to a poor user experience.

We propose overhead-free user-side recommender systems, \textsc{RecCycle} (recommendation + recycle), to address this problem. The main idea of \textsc{RecCycle} is to recycle past recommendation results presented by the provider's recommender systems when the user uses the system as usual. These recommendation results used to be discarded once shown on the page. Sometimes, these recommendations are just shown on the page and do not catch the attention of the user due to the position and/or timing of the presentation. \textsc{RecCycle} ``recycles'' these information to create new recommendations on the user's side. These information can be used ``for free'', i.e., without any additional communication cost. All of the computation for \textsc{RecCycle} is done locally. \textsc{RecCycle} is so communication efficient that it can realize real-time user-side recommendations, and the user can enjoy the recommendations as if they were shown by the official system. 

\textsc{RecCycle} can be combined with existing user-side recommender systems. We will elaborate on the premise of \textsc{RecCycle} in the following sections. As a special case, we show that \textsc{RecCycle} can be combined with \textsc{Consul} \cite{sato2022towards}, which leads to consistent, sound, local, and overhead-free user-side recommender systems (Table \ref{tab: prop}).

In the experiments, we confirm that \textsc{RecCycle} performs as well as state-of-the-art user-side recommendation algorithms while \textsc{RecCycle} reduces costs significantly.

The contributions of this study are as follows:
\begin{itemize}
    \item We propose overhead-free user-side recommender systems, \textsc{RecCycle}, for the first time.
    \item We show that \textsc{RecCycle} is consistent, sound, and local, as well as overhead-free.
    \item We empirically validate that \textsc{RecCycle} performs as well as state-of-the-art user-side recommendation algorithms while \textsc{RecCycle} reduces costs significantly.
    \item We deploy \textsc{RecCycle} in a real-world X (Twitter) environment and confirm that users can realize their own recommender system with specified functionalities they call for using \textsc{RecCycle}.
\end{itemize}

\section{Notations}

For every positive integer $n \in \mathbb{Z}_+$, $[n]$ denotes the set $\{ 1, 2, \dots n \}$.
Let $\mathcal{I} = [n]$ denote the set of items, where $n$ is the number of items. Without loss of generality, we assume that the items are numbered with $1, \dots, n$. $K \in \mathbb{Z}_+$ denotes the length of a recommendation list. The notations are summarized in Table \ref{tab: notations}.

\section{Problem Setting} \label{sec: setting}

\begin{table}[tb]
    \centering
    \caption{Notations.}
    \vspace{-0.1in}
    \begin{tabular}{ll} \toprule
        Notations                         & Descriptions                                         \\ \midrule
        $[n]$                             & The set $\{ 1, 2, \dots, n \}$.                      \\
        $G = (V, E)$                      & A graph.                                             \\
        $\mathcal{I} = [n]$               & The set of items.                                    \\
        $\mathcal{A}$                     & The set of protected groups.                         \\
        $a_i \in \mathcal{A}$             & The protected attribute of item $i \in \mathcal{I}$. \\
        $\mathcal{H} \subset \mathcal{I}$ & The set of items that have been interacted with.     \\
        $\mathcal{P}_{\text{prov}}$       & The provider's official recommender system.          \\
        $K \in \mathbb{Z}_+$              & The length of a recommendation list.                 \\
        $\tau \in \mathbb{Z}_{\ge 0}$     & The minimal requirement of fairness.                 \\
        \bottomrule
    \end{tabular}
    \label{tab: notations}
    \vspace{-0.1in}
\end{table}

We follow the basic setting of user-side recommender systems \cite{sato2022private, sato2022towards}. Suppose we are an end user of the service (e.g., Twitter). The goal is to build our own recommender system without accessing privileged information stored in the service's database.

\subsection{Provider Recommender System}

In this paper, we focus on item-to-item recommendations, following \citet{sato2022private, sato2022towards}\footnote{Note that item-to-user recommender systems can also be built based on item-to-item recommender systems by, e.g., gathering items recommended by an item-to-item recommender system for the items that the user has purchased before.}. Specifically, when we visit the page associated with item $i$. The official recommender system of the service presents $K$ items $\mathcal{P}_\text{prov}(i) \in \mathcal{I}^K$, i.e., for $k = 1, 2, \cdots, K$, $\mathcal{P}_\text{prov}(i)_k \in \mathcal{I}$ is the $k$-th relevant item to item $i$ according to the service's recommender system. For example, $\mathcal{P}_\text{prov}$ is observed in the ``Customers who liked this also liked'' panel in e-commerce platforms. We call $\mathcal{P}_\text{prov}$ the \emph{service provider's official recommender system}. We assume that $\mathcal{P}_\text{prov}$ provides relevant items but is unfair and is a black-box system. The goal is to build a fair and white-box recommender system by leveraging the provider's recommender system.

\subsection{Sensitive Attributes}

We encode the functionalities we call for by sensitive attributes following \cite{sato2022private, sato2022towards}. We assume that each item $i$ has a discrete sensitive attribute $a_i \in \mathcal{A}$, where $\mathcal{A}$ is the set of sensitive groups. For example, in a talent market service, each item represents a person, and $\mathcal{A}$ can be gender or race. In a news recommender system, each item represents a news article, and $\mathcal{A}$ can be $\{$Republican, Democrat$\}$. In news recommender systems,
$\mathcal{A}$ can also be $\{$ $\ge$ 10\,000 views, $\ge$ 1\,000 \& < 10\,000 views, < 1\,000 views $\}$, which leads to a recommender system that includes minor news articles as well. Many functionalities, including fairness, diversity, and serendipity, can be realized by leveraging the sensitive attributes. What sensitive attributes to use is up to the user's demand, and users can choose the sensitive attributes they want to use. In other words, each user can customize their own recommender system by choosing the sensitive attribute. In the following, we generalize the sensitive attributes as $a_i \in \mathcal{A}$, and the user can substitute $\mathcal{A}$ with the sensitive attributes they want to use.

We want a user-side recommender system that offers items from each group in a certain proportion, e.g., so that each group is shown equally, or demographic parity holds. The proportion can also be specified by the user. We assume that sensitive attribute $a_i$ \emph{can be observed}, which is the common assumption in \cite{sato2022private, sato2022towards}. Admittedly, this assumption does not necessarily hold in practice. However, when this assumption is violated, one can estimate $a_i$ from auxiliary information, and the estimation of the attribute is an ordinary supervised learning task and can be solved by off-the-shelf methods, such as neural networks and random forests. As the attribute estimation process is not relevant to the core of user-side recommender system algorithms, this study focuses on the setting where the true $a_i$ can be observed.

\subsection{Communication Cost}

Existing user-side recommender systems \cite{sato2022private, sato2022towards} incur high communication costs. When the user accesses the page associated with item $i$, these methods retrieve recommendations $\mathcal{P}_\text{prov}(i_1), \ldots, \mathcal{P}_\text{prov}(i_c)$ for some items $i_1, \ldots, i_c$ and make a fair recommendation for item $i$ by combining these lists. These ingredients are retrieved by directly calling the API if available or crawling the service site otherwise. It means that these methods require downloading $c$ item pages to build a single recommendation list and incur $c$ times more communication costs than the official recommender system. Such a high communication cost leads to a long loading time, and it becomes impossible to obtain recommendation results in real time. Worse, such methods significantly increase the load on the service, and the service provider may prohibit such activities. In this paper, we aim to build a user-side recommender system that realizes fair recommendations with little or no communication overhead. 

\subsection{Problem Setting}

The problem setting can be summarized as follows:

\begin{tcolorbox}[colframe=gray!20,colback=gray!20,sharp corners]
    \noindent \uline{\textbf{User-side Recommender System Problem.}}\\
    \textbf{Given:} Oracle access to the official recommendations $\mathcal{P}_{\text{prov}}$. Sensitive attribute $a_i \in \mathcal{A}$ of each item $i \in \mathcal{I}$.\\
    \textbf{Output:} A user-side recommender system $\mathcal{Q}\colon \mathcal{I} \to \mathcal{I}^K$ that is fair with respect to $\mathcal{A}$.\\
    $\mathcal{Q}$ should be obtained with as few evaluations of $\mathcal{P}_{\text{prov}}$ as possible.
\end{tcolorbox}

\section{Proposed Method}

We introduce our proposed method, \textsc{RecCycle}. The main idea of \textsc{RecCycle} is to store the past recommendations shown by the provider's recommender systems and utilize them to create new recommendations on the user's side. However, it is not straightforward to use the past recommendations directly as we will see in the following because the past recommendations are not fair.

\subsection{Challenge} \label{sec: challenge}

Simple postprocessing of the past recommendations fails. Suppose we want to build an account recommender system on X (Twitter). We set $\mathcal{A} = \{${\color[HTML]{990099} man}, {\color[HTML]{F69900} woman}$\}$. We visit the page of $i =$ Tom Hanks, and the official recommender system shows $K = 6$ items as follows: \begin{align}
    \mathcal{P}_\text{prov}(i = \text{Tom Hanks})_1 & = \text{Seth Macfarlane ({\color[HTML]{990099} man})} \\
    \mathcal{P}_\text{prov}(i = \text{Tom Hanks})_2 & = \text{Danny Devito ({\color[HTML]{990099} man})}     \\
    \mathcal{P}_\text{prov}(i = \text{Tom Hanks})_3 & = \text{Leonardo Dicaprio ({\color[HTML]{990099} man})}    \\
    \mathcal{P}_\text{prov}(i = \text{Tom Hanks})_4 & = \text{Jason Bateman ({\color[HTML]{990099} man})}       \\
    \mathcal{P}_\text{prov}(i = \text{Tom Hanks})_5 & = \text{Patrick Stewart ({\color[HTML]{990099} man})} \\
    \mathcal{P}_\text{prov}(i = \text{Tom Hanks})_6 & = \text{Tom Cruise ({\color[HTML]{990099} man})}.
\end{align} We want to build fair recommendations for the Tom Hanks page, but it is impossible to create a fair recommendation list that contains {\color[HTML]{F69900} woman} by just preprocessing this list. We need to retrieve relevant {\color[HTML]{F69900} woman} accounts from other sources than $\mathcal{P}_\text{prov}(i = \text{Tom Hanks})$.

To overcome this challenge, we use the recommendation network.

\subsection{Recommendation Network}

A recommendation network is a graph where nodes represent items and edges represent recommendation relations. Recommendation networks have been traditionally utilized to investigate the properties of recommender systems \cite{cano2006topology, celma2008new, seyerlehner2009limitation}. They were also used to construct user-side recommender systems \cite{sato2022private}. Recommendation network $G = (V, E)$ we use in this study is defined as follows:
\begin{itemize}
    \item Node set $V$ is the item set $\mathcal{I}$.
    \item Edge set $E$ is defined by the recommendation results of the provider's recommender system. There exists a directed edge from $i \in V$ to $j \in V$ if item $j$ is included in the recommendation list in item $i$, i.e., $\exists k \in [K] \text{ s.t. } \mathcal{P}_\text{prov}(i)_k = j$.
    \item We do not consider edge weights.
\end{itemize}
It should be noted that $G$ can be constructed solely by accessing $\mathcal{P}_\text{prov}$. In other words, an end user can observe $G$.

Recommendation networks have been used in user-side recommender systems \cite{sato2022private, sato2022towards}. However, there is a clear distinction on how to build user-side recommender systems between \textsc{RecCycle} and the previous methods. Previously, the recommendation network was constructed by issuing queries to the official recommender system API or crawling the service site, which incurs a high communication cost. \textsc{RecCycle} does not issue additional queries to the official recommender system. Rather, \textsc{RecCycle} monitors the recommendations presented by the official recommender system when we use the service as usual and partially builds the recommendation network in an online manner. The regions of the network the user has never visited are not available in this manner, but it is not a problem as we will see in the following. \textsc{RecCycle} utilizes this recommendation network to build fair recommendations.

\subsection{RecCycle}

\textsc{RecCycle} can be combined with standard user-side recommender algorithms and turn them into overhead-free. The central idea is to cache the past recommendations.

Suppose we are using the service as usual and visit item page $i \in \mathcal{I}$. The official recommender system shows $\mathcal{P}_\text{prov}(i) = (r_{i1}, r_{i2}, \dots, r_{iK})$ and we can observe this list $(r_{i1}, r_{i2}, \dots, r_{iK})$ by seeing the recommendation slot in the page. \textsc{RecCycle} stores this list into the cache $\tilde{\mathcal{P}}(i) \leftarrow (r_{i1}, r_{i2}, \dots, r_{iK})$. In a practical implementation, \textsc{RecCycle} monitors the Document Object Model (DOM) of the service page by, e.g., a content script of a web browser, and extracts the list of items shown in the recommendation slot. This is a passive process, and it does not issue additional API queries to the official system but just monitors the web page. Some of the official recommendations may not catch the user's eye due to the position or the timing of the presentation, even if the recommendation is attractive to the user. \textsc{RecCycle} caches all of them, including ones the user misses, and utilizes them for future recommendations. The list can be extracted, and the cache $\tilde{\mathcal{P}}$ can be constructed ``for free'', i.e., without any additional communication cost. Note that strictly speaking, $\tilde{\mathcal{P}}$ is not a standard cache because the original queries are not issued by us but by the official system. We extract the results by scraping the DOM and store them as a cache. If we cached the API queries of traditional user-side recommender systems, we would incur API costs at least for the first query. We avoid the communication overhead even for the first recommendation by recycling the results emitted by the official system. When \textsc{RecCycle} creates a recommendation list on the user's side afterward, \textsc{RecCycle} uses $\tilde{\mathcal{P}}(i)$ instead of issuing queries to the official recommender system. This does not incur any communication overhead as $\tilde{\mathcal{P}}(i)$ is stored locally.

The core idea of \textsc{RecCycle} is simple. \textsc{RecCycle} uses any user-side recommender algorithm $\mathcal{A}$ that uses recommendation networks as a backbone algorithm. Many fair user-side recommender algorithms utilizing recommendation networks have been proposed, such as \textsc{PrivateRank} \cite{sato2022private} and \textsc{Consul} \cite{sato2022towards}. We can use any of them as $\mathcal{A}$. When $\mathcal{A}$ issues a query to the official recommender system $\mathcal{P}_{\text{prov}}$, \textsc{RecCycle} uses $\tilde{\mathcal{P}}(i)$ instead. By doing so, \textsc{RecCycle} makes $\mathcal{A}$ free from the API function $\mathcal{P}_{\text{prov}}$ but relies only on the cache $\tilde{\mathcal{P}}$, i.e., overhead-free. One thing we need to be careful about is that $\tilde{\mathcal{P}}(i)$ can be empty for some $i$ if the user has never visited item $i$. We need to implement a fallback mechanism to cope with the empty cache. In many cases, the fallback mechanism can be realized by backtracking the search in the recommendation network.

\setlength{\textfloatsep}{5pt}
\begin{algorithm2e}[t]
    \caption{\textsc{RecCycle} (combined with \textsc{Consul})}
    \label{algo: RecCycle}
    \DontPrintSemicolon
    \nl\KwData{Cached API $\tilde{\mathcal{P}}$, Source item $i \in \mathcal{I}$, Protected attributes $a_i ~\forall i \in \mathcal{I}$, Minimum requirement $\tau$, Set $\mathcal{H}$ of items that user $i$ has already interacted with, Maximum length $L_{\text{max}}$ of search.}
    \nl\KwResult{Recommended items $\mathcal{R} = \{j_k\}_{1 \le k \le K}$.}
    \nl Initialize $\mathcal{R} \leftarrow []$ (empty), $p \leftarrow i$ \;
    \nl $c[a] \leftarrow 0 \quad \forall a \in \mathcal{A}$ \tcp*{counter of sensitive groups}
    \nl $\mathcal{S} \leftarrow \text{Stack}([])$ \tcp*{empty stack}
    \nl \For{\textup{iter} $\gets 1$ \textbf{to} $L_{\text{max}}$}{
        \nl     \While{$p$ \textup{has already been visited or} $\tilde{\mathcal{P}}(p)$ \textup{is not cached}}{

            \nl     \If{$|\mathcal{S}| = 0$}{
                \nl         \textbf{goto} line 21 \tcp*{no further items}
            }
            \nl     $p \leftarrow \mathcal{S} \text{.pop\_top}()$  \tcp*{next search node}
        }
        \nl     \For{$k \gets 1$ \textbf{to} $K$}{
            \nl         $j \leftarrow \tilde{\mathcal{P}}(p)_{k}$ \;
            \nl         \If{$j \textup{ \textbf{not} \textbf{in} } \mathcal{R} \cup \mathcal{H}$ \textup{\textbf{and}} $\sum_{a \neq a_j} \max(0, \tau - c[a]) \le K - |\mathcal{R}| - 1$}{
                \nl             \tcc{$j$ can be safely added keeping fairness. Avoid items in $\mathcal{R} \cup \mathcal{H}$.}
                \nl             Push back $j$ to $\mathcal{R}$. \;
                \nl             $c[a_j] \leftarrow c[a_j] + 1$ \;
            }
            \nl         \If{$|\mathcal{R}| = K$}{
                \nl             \textbf{return} $\mathcal{R}$ \tcp*{list is full}
            }
        }
        \nl     \For{$k \gets K$ \textbf{to} $1$}{
            \nl         $\mathcal{S} \text{.push\_top}(\tilde{\mathcal{P}}(p)_{k})$ \tcp*{insert candidates}
        }
    }
    \nl \While{$|\mathcal{R}| < K$}{
        \nl     $j \leftarrow$ \text{Uniform}($\mathcal{I}$) \tcp*{random item}
        \nl     \If{$j \textup{ \textbf{not} \textbf{in} } \mathcal{R} \cup \mathcal{H}$ \textup{\textbf{and}} $\sum_{a \neq a_j} \max(0, \tau - c[a]) \le K - |\mathcal{R}| - 1$}{
            \nl         Push back $j$ to $\mathcal{R}$. \;
            \nl         $c[a_j] \leftarrow c[a_j] + 1$ \;
        }
    }
    \nl \textbf{return} $\mathcal{R}$ \;
\end{algorithm2e}

Algorithm \ref{algo: RecCycle} shows the pseudocode of \textsc{RecCycle} combined with \textsc{Consul} \cite{sato2022towards}. We use $\tau \in \mathbb{Z}_{\ge 0}$ to control the trade-off between fairness and performance. $\tau$ is the minimum number of items for each sensitive group. $\tau = 0$ indicates that the recommender system does not care about fairness because no constraint is imposed. An increase in $\tau$ should lead to a corresponding increase in fairness.

The algorithm is a depth-first search algorithm that searches the recommendation network. When the hit item is not cached, it backtracks the search. In lines 21--25, the fallback process ensures that $K$ items are recommended following \textsc{Consul} \cite{sato2022towards} to ensure the soundness of the algorithm.

\begin{figure*}[t]
    \includegraphics[width=0.7\linewidth]{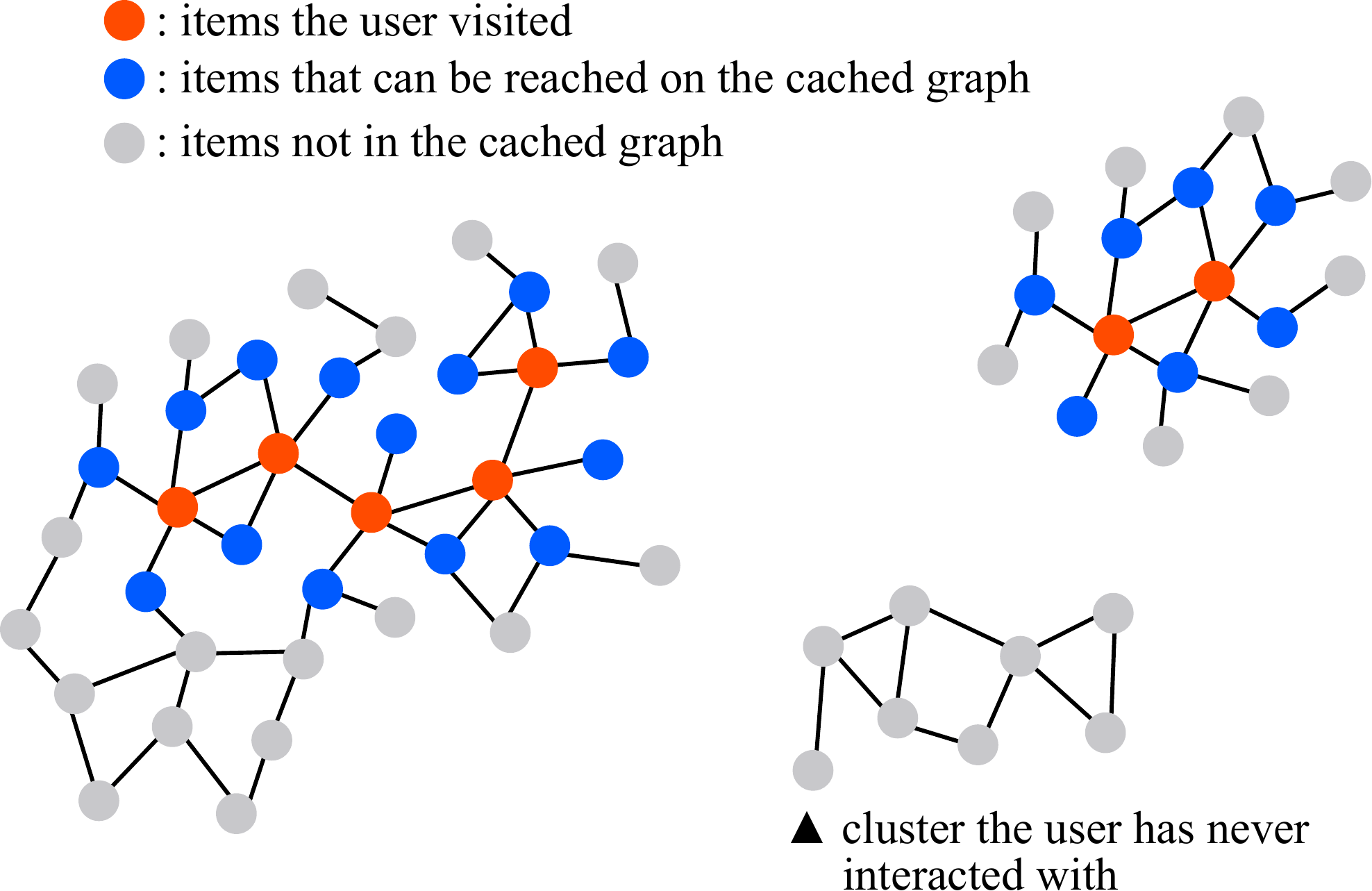}
    \caption{Illustration of the cached recommendation network. When the user visits a page represented by a node, surrounding nodes are recommended and observed in the DOM. The boundary items of the visited items can be reached by the search in the cached recommendation network. These items are likely to be relevant to the user. There are clusters that the user has never visited, and the cached graph does not contain them, but this is not a problem for effective recommendations in most cases.}
    \label{fig: cached_graph}
\end{figure*}

\textbf{Discussion on the size of the cache.} The number of the cached items stored in cache $\tilde{\mathcal{P}}$ is far less than the number of all items in the service. For example, Amazon contains more than $600$ million products, but ordinary users interact with only a few hundred products. \textsc{RecCycle} can use such a small number of items only. It may seem that a small cache size may lead to poor recommendations, but such a small number of items is sufficient in practice. First, most of the items in the service are low-quality and/or irrelevant to the user. It is not likely that we need to take all items into account. Second, users usually walk around the regions they are interested in. The gathered items in the cache are likely to be relevant to the user. Third, the gathered lists $\tilde{\mathcal{P}}(i)$ contain not only the items the user has already visited but also the items that are one-hop away from the interacted items. Therefore, the boundary of the items the user has visited can be reached by the search in the cached recommendation network, and \textsc{RecCycle} can recommend them (Figure \ref{fig: cached_graph}). The boundary items (shown in blue in Figure \ref{fig: cached_graph}) are likely to be relevant to the user, and the other items (shown in gray) are not likely to be necessary for effective recommendations. 

\subsection{Properties of RecCycle} \label{sec: properties}

The properties of user-side recommender systems are defined as follows:

\vspace{0.1in}
\noindent \textbf{Consistency.} A user-side recommender system $\mathcal{Q}$ is consistent if nDCG of $\mathcal{Q}$ with $\tau = 0$ is guaranteed to be the same as that of the official recommender system. In other words, a consistent user-side recommender system does not degrade the performance if we do not impose the fairness constraint.

\vspace{0.1in}
\noindent \textbf{Soundness.} We say a user-side recommender system is sound if the minimum number of items from each sensitive group is guaranteed to be at least $\tau$ provided $0 \le \tau \le K/|\mathcal{A}|$ and there exist at least $\tau$ items for each group $a \in \mathcal{A}$\footnote{Note that if $\tau > K/|\mathcal{A}|$ or there exist less than $\tau$ items for some group $a$, it is impossible to achieve this condition.}. In other words, we can provably guarantee the fairness of a sound user-side recommender system by adjusting $\tau$.

\vspace{0.1in}
\noindent \textbf{Locality.} We say a user-side recommender system is local if it generates recommendations without loading the entire recommendation network. Locality is important for both computational and communication efficiency.

\vspace{0.1in}
\noindent \textbf{Overhead-free.} We say a user-side recommender system is overhead-free if it does not access the official recommender system $\mathcal{P}_{\text{prov}}$ when making a recommendation list. Overhead-freeness is important for communication efficiency.

\begin{theorem} \label{thm: recrecycly}
    \textsc{RecCycle} with \textsc{Consul} is consistent, sound, local, and overhead-free.
\end{theorem}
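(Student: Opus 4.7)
The plan is to verify the four properties in Theorem~\ref{thm: recrecycly} one at a time, treating overhead-freeness and locality as immediate consequences of inspecting Algorithm~\ref{algo: RecCycle}, and using a single loop invariant to handle both consistency (when $\tau=0$) and soundness. The real content is the invariant; the rest is bookkeeping.

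Overhead-freeness follows by direct inspection: every lookup in the main loop of the algorithm hits $\tilde{\mathcal{P}}(\cdot)$ rather than $\mathcal{P}_{\text{prov}}(\cdot)$, and the cache $\tilde{\mathcal{P}}$ is populated passively by scraping the DOM of pages the user would visit anyway. Locality follows because the depth-first search expands only nodes that are already in the cache (the while loop on line~7 explicitly skips uncached nodes), so at most the subgraph of $G$ induced by cached entries is ever touched. For consistency, observe that when $\tau = 0$ the guard $\sum_{a \neq a_j}\max(0, \tau - c[a]) \le K - |\mathcal{R}| - 1$ reduces to $0 \le K - |\mathcal{R}| - 1$, which holds whenever $|\mathcal{R}| < K$. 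Since the user is visiting item $i$, the call to \textsc{RecCycle} runs with $p = i$ and $\tilde{\mathcal{P}}(i) = \mathcal{P}_{\text{prov}}(i)$, so the first loop iteration pushes the first $K$ items of $\mathcal{P}_{\text{prov}}(i) \setminus \mathcal{H}$ into $\mathcal{R}$ in the same order. Thus \textsc{RecCycle}'s output coincides with the official recommendation (modulo the already-interacted items, which is the natural baseline), giving identical nDCG.

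The main obstacle, and the one I would spend the most time on, is soundness. I would establish the invariant
\[
J(\mathcal{R}, c) \;\define\; |\mathcal{R}| + \sum_{a \in \mathcal{A}} \max\bigl(0,\; \tau - c[a]\bigr) \;\le\; K
\]
and show it is preserved by every update on lines~12--15 and lines~23--25. Initially $|\mathcal{R}| = 0$, $c \equiv 0$, and $J = |\mathcal{A}|\tau \le K$ by the hypothesis $\tau \le K/|\mathcal{A}|$. When line~14 pushes $j$, $|\mathcal{R}|$ grows by one and $\sum_a \max(0, \tau - c[a])$ drops by one if $c[a_j] < \tau$ (so $J$ is unchanged) or stays the same if $c[a_j] \ge \tau$; in the latter case the guard on line~12 rewrites as $J(\mathcal{R} \cup \{j\}, c') \le K$, so the invariant is preserved. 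Upon termination with $|\mathcal{R}| = K$, the invariant yields $\sum_a \max(0, \tau - c[a]) \le 0$, i.e.\ $c[a] \ge \tau$ for every $a \in \mathcal{A}$, which is exactly soundness.

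The remaining worry is that the DFS might exhaust the cache before filling $\mathcal{R}$; this is handled by the fallback in lines~21--25. To close the proof I would argue that under the stated hypothesis (at least $\tau$ items per group), for every deficient group $a$ there is an item in $a \setminus (\mathcal{R} \cup \mathcal{H})$ whose acceptance is compatible with the guard (because $J \le K$ keeps enough slack), and uniform sampling hits such an item with positive probability per draw, so the fallback terminates almost surely and the output still satisfies the invariant. Combining all four pieces gives the claim.
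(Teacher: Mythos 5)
Your proof follows essentially the same route as the paper's: your invariant $|\mathcal{R}| + \sum_{a}\max(0,\tau-c[a]) \le K$ is exactly the paper's inequality $\sum_{a}\max(0,\tau-c[a]) \le K - |\mathcal{R}|$ rearranged, with the same two-case analysis on whether $c[a_j] \ge \tau$, and the consistency, locality, and overhead-freeness arguments are the same direct inspections of Algorithm~\ref{algo: RecCycle}. Your added observation that the fallback loop terminates almost surely (so that $|\mathcal{R}| = K$ is actually reached and the invariant can be invoked at termination) addresses a point the paper's proof leaves implicit, and is a welcome extra.
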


\begin{proof}
    The proof is similar to that of \cite{sato2022towards}. We repeat the proof for completeness.

    \noindent \textbf{Consistency.} If $\tau = 0$, $\sum_{a \neq a_j} \max(0, \tau - c[a]) = 0$ holds in line 13. Therefore, the condition in line 13 passes in all $K$ iterations in the initial node, and the condition in line 17 passes at the $K$-th iteration. Besides, $\tilde{\mathcal{P}}(i)$ is guaranteed to be cached as this is the exact item that the user is currently viewing. The final output is $\mathcal{R} = \tilde{\mathcal{P}}(i) = \mathcal{P}_{\text{prov}}(i)$.

    \vspace{0.1in}
    \noindent \textbf{Soundness.} We prove by mathematical induction that \begin{align}\sum_{a \in \mathcal{A}} \max(0, \tau - c[a]) \le K - |\mathcal{R}| \label{eq: consul-proof-eq}\end{align} holds in every step in Algorithm \ref{algo: RecCycle}. At the initial state, $|\mathcal{R}| = 0$, $c[a] = 0 ~\forall a \in \mathcal{A}$, and $\sum_{a \in \mathcal{A}} \max(0, \tau - c[a]) = \tau |\mathcal{A}|$. Thus, the inequality holds by the assumption $\tau |\mathcal{A}| \le K$. The only steps that alter the inequality are lines 15--16 and lines 24-25. Let $c, \mathcal{R}$ be the states before the execution of these steps, and $c', \mathcal{R}'$ be the states after the execution. We prove that \begin{align}\sum_{a \in \mathcal{A}} \max(0, \tau - c'[a]) \le K - |\mathcal{R}'|\end{align} holds assuming \begin{align}\sum_{a \in \mathcal{A}} \max(0, \tau - c[a]) \le K - |\mathcal{R}|, \label{eq: consul-proof-inductive}\end{align} i.e., the inductive hypothesis.  When these steps are executed, the condition \begin{align}\sum_{a \neq a_j} \max(0, \tau - c[a]) \le K - |\mathcal{R}| - 1 \label{eq: consul-proof-if}\end{align} holds by the conditions in lines 13 and 23. We consider two cases. (i) If $c[a_j] \ge \tau$ holds,
    \begin{align*}
    \sum_{a \in \mathcal{A}} \max(0, \tau - c'[a])
    &\stackrel{\text{(a)}}{=} \sum_{a \neq a_j} \max(0, \tau - c'[a]) \\
    &\stackrel{\text{(b)}}{=} \sum_{a \neq a_j} \max(0, \tau - c[a]) \\
    &\stackrel{\text{(c)}}{\le} K - |\mathcal{R}| - 1 \\
    &= K - |\mathcal{R}'|,
    \end{align*}
    where (a) follows $c[a_j] \ge \tau$, (b) follows $c'[a] = c[a] ~(\forall a \neq a_j)$, and (c) follows eq. \eqref{eq: consul-proof-if}. (ii) If $c[a_j] < \tau$ holds,
    \begin{align*}
    \sum_{a \in \mathcal{A}} \max(0, \tau - c'[a])
    &\stackrel{\text{(a)}}{=} \sum_{a \in \mathcal{A}} \max(0, \tau - c[a]) - 1 \\
    &\stackrel{\text{(b)}}{\le} K - |\mathcal{R}| - 1 \\
    &= K - |\mathcal{R}'|,
    \end{align*}
    where (a) follows $c'[a_j] = c[a_j] + 1$ and $c[a_j] + 1 \le \tau$, and (b) follows eq. \eqref{eq: consul-proof-inductive}. In sum, eq. \eqref{eq: consul-proof-eq} holds by mathematical induction. When Algorithm \ref{algo: RecCycle} terminates, $|\mathcal{R}| = K$. As the left hand side of eq. \eqref{eq: consul-proof-eq} is non-negative, each term should be zero. Thus, $c[a] = |\{i \in \mathcal{R} \mid a_i = a\}| \ge \tau$ holds for all $a \in \mathcal{A}$.
    
    \vspace{0.1in}
    \noindent \textbf{Locality.} \textsc{RecCycle} accesses the recommendation network in lines 12 and 20. As the query item $p$ changes at most $L_\text{max}$ times in line 10, \textsc{RecCycle} accesses at most $L_\text{max}$ items, which is a constant, among $n$ items.

    \vspace{0.1in}
    \noindent \textbf{Overhead-free.} \textsc{RecCycle} is overhead-free because it does not access the official recommender system $\mathcal{P}_{\text{prov}}$ in Algorithm \ref{algo: RecCycle}.
\end{proof}

\section{Experiments} \label{sec: experiments}

We answer the following questions through the experiments.

\begin{itemize}
    \item (RQ1) How good a trade-off between performance and efficiency does \textsc{RecCycle} strike?
    \item (RQ2) Is \textsc{RecCycle} robust to sparse cache?
    \item (RQ3) Does \textsc{RecCycle} work in the real world?
\end{itemize}

\begin{figure*}[t]
    \captionsetup{labelformat=empty}
    \caption{Table 3. Performance Comparison. Cost denotes the average number of times each method accesses item pages, i.e., the number of queries to the official recommender systems. The less this value is, the more communication-efficient the method is. The best score is highlighted in bold among the four methods (excluding the Oracle method). \textsc{RecCycle} is extremely more efficient than other methods by achieving zero communication overhead (i.e., it accesses only one page, the item page the user is currently viewing) while it achieves on par or slightly worse performances than the Oracle method and state-of-the-art user-side recommender systems.}

    \includegraphics[width=\linewidth]{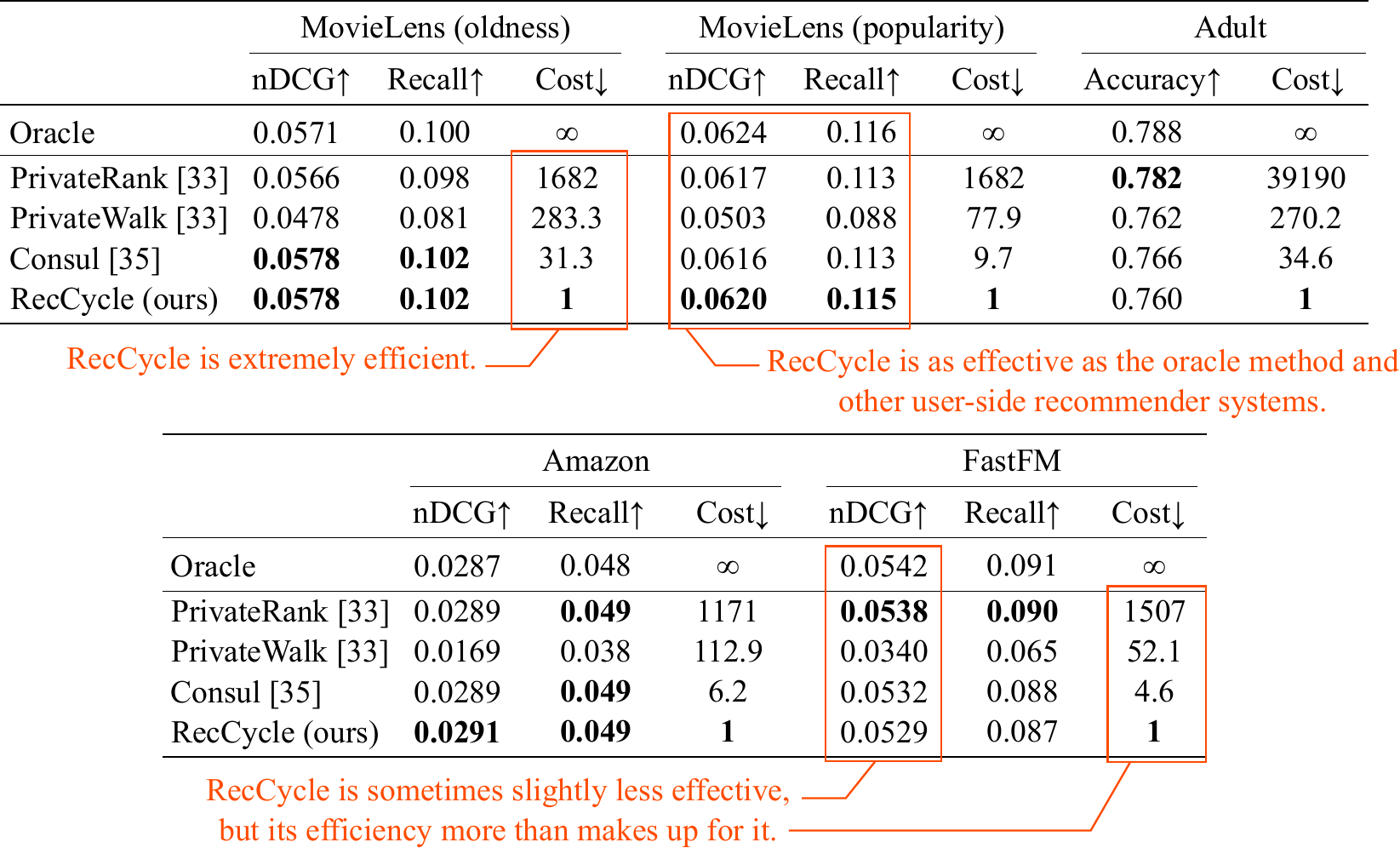}
    \label{tab: performance}
\end{figure*}
\addtocounter{table}{1}
\addtocounter{figure}{-1}

\subsection{(RQ1) Performance} \label{sec: experiments-performance}

\noindent \textbf{Setup.} We use MovieLens100k \cite{harper2016movielens}, Adult dataset, Amazon Home and Kitchen \cite{he2016ups, mcauley2015image}, and LastFM \footnote{\url{https://grouplens.org/datasets/hetrec-2011/}} datasets following the existing work \cite{sato2022private, sato2022towards}.

\noindent \textbf{MovieLens dataset.} In this dataset, an item represents a movie. We consider two ways of creating protected groups, (i) oldness: We regard movies released before 1990 as a protected group, and (ii) popularity: We regard movies with less than $50$ reviews as the protected group. We use Bayesian personalized ranking (BPR) \cite{rendle2009bpr} for the provider's recommender system, where the similarity of items is defined by the inner product of the latent vectors of the items, and the top-$K$ similar items are recommended. We use the default parameters of Implicit package\footnote{\url{https://github.com/benfred/implicit}} for BPR. We measure nDCG@$K$ and recall@$K$ as performance metrics following previous works \cite{krichene2020sampled, he2017neural, rendle2009bpr, sato2022private}. Note that we use the full datasets to compute nDCG and recall instead of employing negative samples to avoid biased evaluations \cite{krichene2020sampled}.

\noindent \textbf{Adult dataset.} In this dataset, an item represents a person, and the sensitive attribute is defined by sex. We use the nearest neighbor recommendations with demographic features, including age, education, and capital-gain, as the provider's official recommender system. The label of an item (person) represents whether the income exceeds \$50\,000 per year. The accuracy for item $i$ represents the ratio of the recommended items for item $i$ that have the same label as item $i$. The overall accuracy is the average of the accuracy of all items.

\noindent \textbf{LastFM and Amazon dataset.} In these datasets, an item represents a music and a product, respectively. We regard items that received less than $50$ interactions as a protected group. We extract $10$-cores for these datasets by iteratively discarding items and users with less that $10$ interactions. We use BPR for the provider's official recommender system, as on the MovieLens dataset. We use nDCG@$K$ and recall@$K$ as performance metrics.

In all datasets, we set $K = 10$ and $\tau = 5$, i.e., recommend $5$ protected items and $5$ other items. Note that all methods, \textsc{RecCycle} and the baselines, are guaranteed to generate completely balanced recommendations, i.e., they recommend $5$ protected items and $5$ other items, when we set $K = 10$ and $\tau = 5$. We checked that the results were completely balanced. So we do not report the fairness scores but only report performance metrics in this experiment.

\textsc{RecCycle} requires the cache. As these datasets do not contain user history data, we simulate the user behavior by a random walk on the recommendation network, assuming users surf the service site by following recommendations, and create the cache by storing the past recommendations for the items the user saw. This simulation is reasonable because users often find items by recommender systems, e.g., $75$ \% of watches in Netflix came from recommender systems \cite{mackenzie2013how}. We set the length of the random walk (i.e., the number of items in the user history) to $100$ for all datasets, which is a reasonable number of items that users interact with in practice.

\vspace{0.1in}
\noindent \textbf{Methods.} We use \textsc{PrivateRank} and \textsc{PrivateWalk} \cite{sato2022private} and \text{Consul} \cite{sato2022towards} as baseline methods. We use the default hyperparameters reported in the original papers for these methods. Note that \textsc{PrivateRank} and \textsc{PrivateWalk} were reported to be insensitive to the choice of these hyperparameters \cite{sato2022private}. In addition, we use the oracle method \cite{sato2022private} that uses the complete similarity scores used in the provider's recommender system and adopts the same fair postprocessing as \textsc{PrivateRank}. The oracle method uses hidden information, which is not observable by end users, and can be seen as an ideal upper bound of performance.

\vspace{0.1in}
\noindent \textbf{Results.} Table 3 reports performances and efficiency of each method\footnote{The values are inconsistent with ones reporeted in \cite{sato2022towards}. We found a bug in the evaluation script used in \cite{sato2022towards}. It divided the sum of recalls and nDCGs by the number $n$ of items to compute the mean, but it should be divided by the number $m$ of users. We fixed it in this table. It only affects the scale of values, and the tendency is the same.}. First, \textsc{RecCycle} is the most efficient method and requires only one query to generate a recommendation list while the most efficient baseline, \textsc{Consul} requires $4$ to $6$ times more queries to the official recommender system. The only one page \textsc{RecCycle} requires is the item page the user is currently viewing, which is necessary anyway, even without \textsc{RecCycle}, and it means that \textsc{RecCycle} achieves zero communication overhead. Note that \textsc{PrivateWalk} and \textsc{Consul} were initially proposed as an efficient method. Nevertheless, \textsc{RecCycle} further improves communication costs by a large margin. Second, \textsc{RecCycle} performs on par or slightly worse than Oracle, \textsc{Consul}, and \textsc{PrivateRank}, whereas \textsc{PrivateWalk} degrades performance in exchange for its efficiency. In sum, \textsc{RecCycle} strikes an excellent trade-off between performance and communication costs.

\begin{figure}
    \centering
    \includegraphics[width=0.8\linewidth]{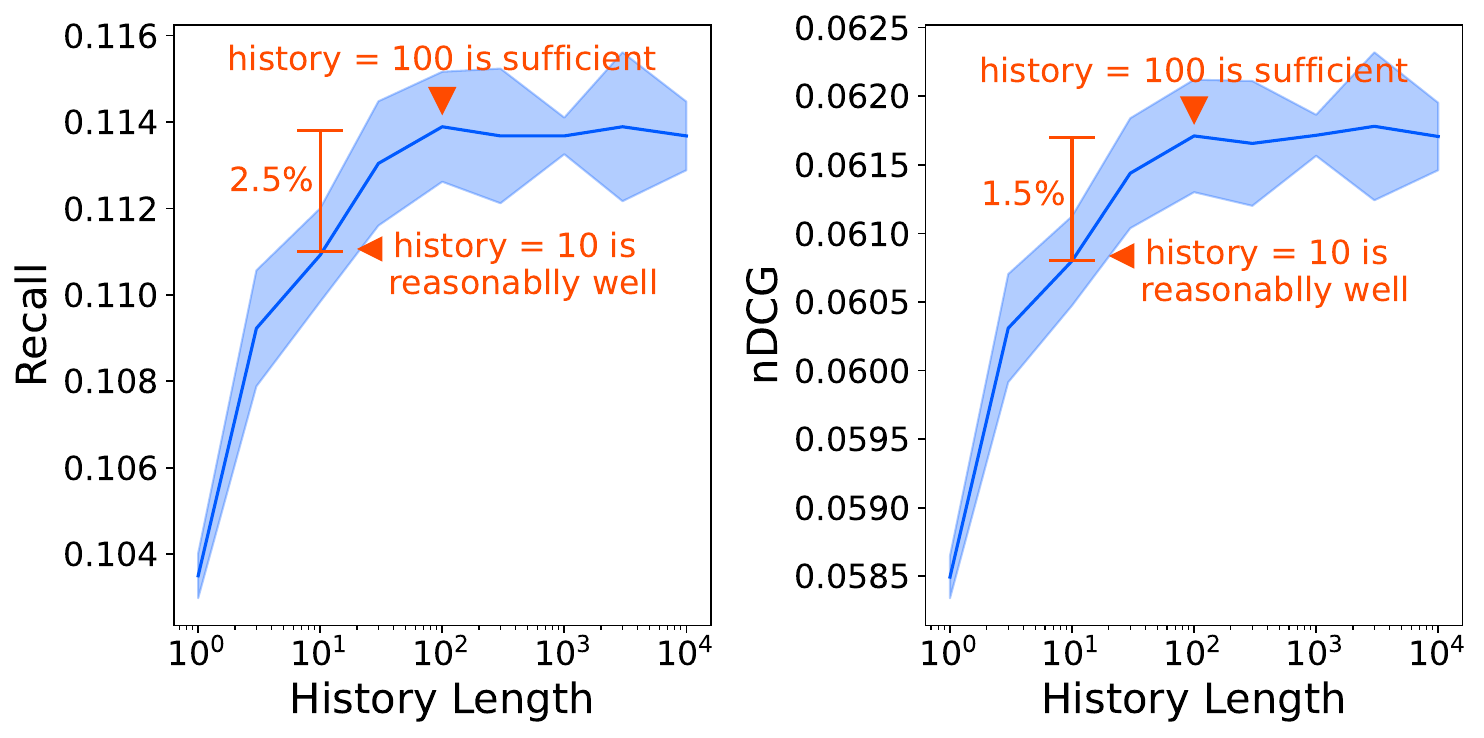}
    \caption{Performance of \textsc{RecCycle} with different lengths of user histories. Note that the y-axis starts at around $0.102$ for recall and $0.058$ for nDCG, and the relative difference in performance is within a few percent. \textsc{RecCycle} achieves good performance even when the user history is short and the cache is sparse.}
    \label{fig: sparsity}
\end{figure}

\subsection{(RQ2) Robustness against Sparsity} \label{sec: sparsity}

We investigate the robustness of \textsc{RecCycle} against sparse history. \textsc{RecCycle} relies on the cache stored in the user's device. The cache may be sparse when the user has interacted with few items, i.e., the cold start problem in the user-side recommender systems. We evaluate the performance of \textsc{RecCycle} with different lengths of user histories.

The problem setup is the same as in Section \ref{sec: experiments-performance}, but we vary the lengths of random walks, i.e., the length of user histories. A shorter random walk indicates the user is newer or less active and has interacted with fewer items. We show the results for MovieLens + popularity fairness as the tendency is the same for all datasets. Figure \ref{fig: sparsity} shows the performance of \textsc{RecCycle} with different lengths of user histories. We observe that the performance of \textsc{RecCycle} is robust against the sparsity of interaction. Note that the y-axis starts at around $0.102$ for recall and $0.058$ for nDCG, and the relative difference in performance is within a few percent. The performance is slightly worse when the cache is sparse, but the drop is marginal. Overall, $100$ items are sufficient to achieve a good performance, which is a realistic number of items that users interact with in practice, and a few to tens of items are sufficient to achieve a reasonable performance, which means that \textsc{RecCycle} is robust against the cold start problem.

\subsection{(RQ3) Case Study in the Real World} \label{sec: twitter}

\begin{table}[t]
    \caption{Recommendations for ``Tom Hanks'' on the real-world X (Twitter) environment. The provider's official system recommends only man accounts and is not fair with respect to gender. All of \textsc{PrivateWalk}, \textsc{Consul}, and \textsc{RecCycle} succeed in completely balanced recommendations. All the recommended users by \textsc{RecCycle} are actors/actresses and related to the source user, Tom Hanks. Moreover, \textsc{RecCycle} requires only the information of the one page that the user is currently viewing, and is more efficient than \textsc{PrivateWalk} and \textsc{Consul}, which require several additional queries to the official system of X.}
    \centering
    \vspace{-0.15in}
    \scalebox{0.72}{
        \begin{tabular}{lcccc} \toprule
            Source Item & \multicolumn{4}{c}{Tom Hanks}                                                                                                         \\ \midrule
            Method & Provider                                      & \textsc{PrivateWalk}                               & \textsc{Consul}                                    & \textsc{RecCycle}     \\
            Cost & 1 access                                          & 17 accesses                                        & 4 accesses                                         & 1 access (overhead-free)    \\ \midrule
            & Seth Macfarlane ({\color[HTML]{990099} man}) & Seth Macfarlane ({\color[HTML]{990099} man})      & Seth Macfarlane ({\color[HTML]{990099} man})      & Seth Macfarlane ({\color[HTML]{990099} man}) \\
            & Danny Devito ({\color[HTML]{990099} man})     & Seth Meyers ({\color[HTML]{990099} man})           & Danny Devito ({\color[HTML]{990099} man})          & Danny Devito ({\color[HTML]{990099} man}) \\
            & Leonardo Dicaprio ({\color[HTML]{990099} man})& Tom Cruise ({\color[HTML]{990099} man})            & Leonardo Dicaprio ({\color[HTML]{990099} man})     & Leonardo Dicaprio ({\color[HTML]{990099} man}) \\
            & Jason Bateman ({\color[HTML]{990099} man})    & Alia Bhatt ({\color[HTML]{F69900} woman})          & Julia Louis-Dreyfus ({\color[HTML]{F69900} woman}) &  Rihanna ({\color[HTML]{F69900} woman}) \\
            & Patrick Stewart ({\color[HTML]{990099} man})  & Christina Applegate ({\color[HTML]{F69900} woman}) & Christina Applegate ({\color[HTML]{F69900} woman}) & Lady Gaga ({\color[HTML]{F69900} woman})         \\
            & Tom Cruise ({\color[HTML]{990099} man})       & Candace Owens ({\color[HTML]{F69900} woman})        & Anna Kendrick ({\color[HTML]{F69900} woman})       & Katy Perry ({\color[HTML]{F69900} woman})  \\ \bottomrule
        \end{tabular}
    }
    \label{tab: twitter}
\end{table}

Finally, we show that \textsc{RecCycle} is applicable to real-world services via a case study. We use the user recommender system on X (Twitter), which is in operation in the real world. We stress that we are not employees of X Corp. and do not have any access to the hidden data stored in X. We run the experiments as an ordinary end user. The sensitive attribute is defined by gender in this experiment. Table \ref{tab: twitter} shows the results for the account of ``Tom Hanks.'' The official recommender system shows only man accounts and is not fair with respect to gender. By contrast, \textsc{RecCycle}'s recommendations are completely balanced. In addition, \textsc{RecCycle} is overhead-free, i.e., it requires only the information of the one page that the user is currently viewing, and is more efficient than \textsc{PrivateWalk} and \textsc{Consul}, which require several additional queries to the official system of X. The results show that we can build a fair recommender system for X with respect to gender and realize the functionality we call for even though we are not employees but ordinary end users. 

\section{Related Work}

\noindent \textbf{User-side Realization.} User-side realization \cite{sato2022private, sato2024user} aims to help end users of services realize the functionalities they want, whereas traditional methods are designed for the developers of the services. Since the service is not tailor-made for a user, it is natural for dissatisfaction to arise. User-side realization provides proactive solutions to this problem. There have been many algorithms for user-side realization, including user-side search engines \cite{sato2022retrieving,sato2022clear}, user-side privacy protection \cite{sato2024making}, user-side watermarking of large language models \cite{sato2023embarrassingly}.

Among them, user-side recommender systems \cite{sato2022private, sato2022towards} are the most relevant to this study. \textsc{PrivateWalk} and \textsc{PrivateRank} are the first user-side recommender systems \cite{sato2022private}. However, they are not practical because they require a large number of accesses to the official recommender system or degrade the performance. \textsc{Consul} is the state-of-the-art user-side recommender system that strikes an excellent trade-off between performance and efficiency. \textsc{Consul} is also theoretically grounded and is the first user-side recommendation algorithm that is proven to hold three desired properties: consistency, soundness, and locality. Although \textsc{Consul} is the best user-side recommender system in the literature, it still incurs non-negligible communication overhead and is not practical in the real world.

Our proposed method, \textsc{RecCycle}, is the first overhead-free user-side recommender system. Our method is built on \textsc{Consul} and inherits the properties while drastically improving the communication efficiency.

\vspace{0.1in}
\noindent \textbf{Fairness in Recommender Systems.} As fairness has become a major concern in society \cite{united2014big, executive2016big}, many fairness-aware machine learning algorithms have been proposed \cite{hardt2016equality, kamishima2012fairness, zafar2017fairness}. In particular, fairness with respect to gender \cite{zehlike2017fair, singh2018fairness, xu2020algorithmic}, race \cite{zehlike2017fair, xu2020algorithmic}, financial status \cite{fu2020fairness}, and popularity \cite{mehrotra2018towards, xiao2019beyond} is of great concern. In light of this, many fairness-aware recommendation algorithms have been proposed \cite{kamishima2012enhancement, yao2017beyond, biega2018equity, milano2020recommender}. Some of them aim to ensure fairness for users \cite{bruke2017mltisided} and others aim to ensure fairness for items (such as products and accounts) \cite{ekstrand2018exploring, beutel2019fairness, mehrotra2018towards, liu2019personalized, geyik2018building}, and some aim to ensure fairness for both users and items \cite{bruke2017mltisided}. In this study, we focus on fairness for items following \cite{sato2022private, sato2022towards}, but the proposed method can also be applied to fairness for users when the official system is not fair for the very user that uses the user-side system. The user can overcome the unfairness of the official system by creating their own recommender system. Note that fairness is closely related to topic diversification \cite{ziegler2005improving} by regarding the topic as the sensitive attribute, and we considered the diversity of recommended items in this study as well.

\vspace{0.1in}
\noindent \textbf{Steerable Recommender Systems.} The reliability of recommender systems has attracted a lot of attention \cite{tintarev2007survey, balog2019transparent}, and steerable recommender systems that let the users modify the behavior of the system have been proposed \cite{green2009generating, balog2019transparent}. User-side recommender systems also allow the users to modify the recommendation results. However, the crucial difference between steerable and user-side recommender systems is that steerable recommender systems must be implemented by a service provider, whereas user-side recommender systems can be built by arbitrary users even if the official system is an ordinary (non-steerable) one. Therefore, user-side recommender systems can expand the scope of steerable recommender systems by a considerable margin \cite{sato2022private}.

\section{Conclusion}

We proposed \textsc{RecCycle}, the first overhead-free user-side recommender system. \textsc{RecCycle} realizes the functionalities a user wants on the user's side without communication overhead to the official system, which is in stark contrast to the existing user-side recommender systems, which require additional queries to the official system. This property is beneficial both for the users and the service providers. The users can enjoy the user-side system in real-time as if it were the official system, and the service providers can reduce the load of the servers. In the experiments, we confirmed that \textsc{RecCycle} is as effective as existing user-side recommender systems while it is much more efficient. We also carried out a case study on the real-world Twitter environment and showed that \textsc{RecCycle} can be applied to real-world services. We believe that \textsc{RecCycle} will be a practical and useful tool for end users to build their own recommender systems.

\bibliographystyle{plainnat}
\bibliography{sample-base}


\end{document}